\documentclass[10pt,onecolumn,oneside]{article}

\usepackage{amsmath,amssymb,amsfonts,amsthm}
\usepackage{authblk}
\usepackage{stackengine}
\usepackage{longtable}
\usepackage{algorithm,algorithmicx,algpseudocode}
\usepackage{graphicx}
\usepackage{xcolor}
\usepackage{url,hyperref}


\usepackage[font=small,skip=6pt]{caption}


\setlength{\floatsep}{6pt}
\setlength{\textfloatsep}{6pt}
\setlength{\abovecaptionskip}{4pt}
\setlength{\belowcaptionskip}{0pt}

\allowdisplaybreaks 

\theoremstyle{plain} \newtheorem{lemma}{\textbf{Lemma}}
\theoremstyle{plain} 
\theoremstyle{remark} \newtheorem{remark}{\textbf{Remark}}
\theoremstyle{plain} \newtheorem{theorem}{\textbf{Theorem}}
\theoremstyle{plain} 
\theoremstyle{plain} 
\theoremstyle{definition} 
\theoremstyle{plain} 
\theoremstyle{plain}

\newtheoremstyle{thmstyleAAA3}{3pt}{3pt}{}{}{\bfseries}{.}{0.5em}{}
\theoremstyle{thmstyleAAA3} 
\newcommand{\rmv}[1]{}
\newcommand{\redtext}[1]{\color{black}{}\color{red}{#1}\color{black}{}}

\newcommand{\cyantext}[1]{\color{black}{}\color{cyan}{#1}\color{black}{}}

\newcommand{\greentext}[1]{\color{black}{}\color{green}{#1}\color{black}{}}
\newcommand{\boldgraytext}[1]{\color{black}{}\color{gray}\textbf{#1}\color{black}{}}

\makeatletter
\let\@@pmod\pmod
\DeclareRobustCommand{\pmod}{\@ifstar\@pmods\@@pmod}
\def\@pmods#1{\mkern4mu({\operator@font mod}\mkern 6mu#1)}
\makeatother


\makeatletter
\newenvironment{breakablealgorithm}
  {
   \begin{center}
     \refstepcounter{algorithm}
     \hrule height.8pt depth0pt \kern2pt
     \renewcommand{\caption}[2][\relax]{
       {\raggedright\textbf{\ALG@name~\thealgorithm} ##2\par}%
       \ifx\relax##1\relax 
         \addcontentsline{loa}{algorithm}{\protect\numberline{\thealgorithm}##2}%
       \else 
         \addcontentsline{loa}{algorithm}{\protect\numberline{\thealgorithm}##1}%
       \fi
       \kern2pt\hrule\kern2pt
     }
  }{
     \kern2pt\hrule\relax
   \end{center}
  }
\makeatother

\floatname{algorithm}{Procedure}

\renewcommand{\algorithmicreturn}[1]{\bgroup\\  ~#1\egroup}
\renewcommand{\algorithmiccomment}[1]{\bgroup\hfill//~#1\egroup}
\setlength{\affilsep}{1mm}
\title{A generalization of the Von Neumann extractor}
\author{Claude Gravel}
\affil{\stackunder{\small{EAGLYS Inc.}}{\stackunder{\small{Tokyo, Japan}}{\stackunder{\small{\texttt{claudegravel1980@gmail.com}}}{\small{\texttt{c\_gravel@eaglys.co.jp}}}}}}
\date{\today}

\begin{document}




\maketitle

\begin{abstract}
An iterative randomness extraction algorithm which generalized the Von Neumann's extraction algorithm is detailed, analyzed and implemented in standard C++. Given a sequence of independently and identically distributed biased Bernoulli random variables, to extract randomness from the aforementioned sequence pertains to produce a new sequence of independently and identically distributed unbiased Bernoulli random variables. The iterative construction here is inspired from the work of Stout and Warren \cite{SouWar_1984} who modified appropriately the tree of probabilities produced by recursively repeating the Von Neumann's extraction algorithm. The correctness of the iterative algorithm is proven. The number of biased Bernoulli random variables needed to produce one unbiased instance is the complexity of interest. The complexity depends on the bias of the source. The expected complexity converges toward $3.10220648$\ldots~when the bias tends to $0$ and diverges when the bias tends to $1/2$. In addition to the expected complexity, some other results that concern the limiting asymptotic construction, and that seem unnoticed in the literature so far, are proven.

\textbf{Keywords:} random number, entropy, extractor, biased coin, unbiased coin, tree algorithm
\end{abstract}

\section{Definition of the problem}

Given a binary alphabet $\mathcal{A}=\{0,1\}$, and a Bernoulli distribution on $\mathcal{A}$ defined by the probability vector $\mathbf{p}=(p_0,p_1)=(1-p,p)$ for some $0<p<1$, consider an infinite length binary random sequence $\mathbf{X}=(X_{i})_{i\in\mathbb{N}}\in\mathcal{A}^{\mathbb{N}}$. The random variables $X_i$ are independent of each other. Suppose that $p$ is \emph{unknown}, that $p$ cannot be determined exactly or that a statistical estimation is unacceptable like in cryptographic settings for instance. Mechanisms that produce sequences of independently and identically distributed \emph{biased} bits, abbreviated by i.i.d.~hereafter, with partially or unknown bias need to be de-biased such as in Grass et al.~\cite{Gra_ETAL_2020} for instance. Said differently, de-biasing a biased sequence is about extracting the randomness from the aforementioned biased sequence to produce a new unbiased and shorter sequence. Therefore, how can we extract i.i.d.~unbiased bits from the sequence $\mathbf{X}$ in a way that minimizes the number of consumed biased bits on average? To do that, we shall re-explore an idea of Von Neumann \cite{vN_1951}, and build upon it a strategy inspired from Stout and Warren \cite{SouWar_1984}. We are interesting here in a useful and efficient implementation no matter $p$. A non-exhaustive list of research articles discuss the generation of unbiased coins from i.i.d.~biased coins such as Bernard and Letac \cite{BerLet_1973}, Dwass \cite{Dwa_1972}, Elias \cite{Eli_1972}, Hoeffding and Gordon \cite{HoeGor_1970}, Pae and Loui \cite{PaeLou_2006}, Samuelson \cite{Sam_1968}, and Uehara \cite{Ryu_1995}. A comprehensive survey about uniform random generation is contained in L'\'{E}cuyer \cite{Ecu_2017}. The reverse problem of producing non-uniform discrete random variables from a sequence of unbiased i.i.d.~bits have been studied for instance in the last chapter of Devroye \cite{Dev_book1986}, the short survey from Gravel and Devroye \cite{DevGra_2020}, the work Han and Hoshi \cite{HanHos_1997}, and Knuth and Yao \cite{KnuYao_1976}.

In this article, the unbiased output are denoted by \texttt{T} (tail) and by \texttt{H} (head). The symbol $\mathbf{P}$ is generic and is used to denote the probability of an event with respect to its underlying probability space; the context shall render clear to which probability space we refer to. We use $\mathbf{E}$ to denote the expectation of a random variable. Upper-case letters denote random variables and lower-case letters denote their realizations.

We emphasize that $p$ is \emph{unknown} here. In the case of a known $p$, algorithms that fall under the Bernoulli factory umbrella have better performances. A detailed \emph{iterative} implementation is provided in section \ref{SECT_gen_opt_algo} with its correctness and efficiency proven. A C++ implementation can be found at \url{https://github.com/63EA13D5/}. As shown in Stout and Warren \cite{SouWar_1984}, there is no optimal algorithm for the extraction problem. When the bias tends to $0$ (or $p\to 1/2$), the expected complexity tends to $3.102206486$\ldots~biased bits for one unbiased bit as it will be shown in section \ref{SECT_comp_stat}. A method to find the expected complexity for general $p$ is established in section \ref{SECT_comp_stat}.

\section{A general extraction algorithm}\label{SECT_gen_opt_algo}

We recall briefly Von Neumann's idea that consists to split the sequence $\mathbf{X}$ into blocks of length $2$, and map a block with $01$ to $\mathtt{T}$, a block with $10$ to $\mathtt{H}$, and to discard any occurrence of $00$ or $11$. The procedure just described is amenable to a tree representation as on figure \ref{FIG_orig_algo}. The outputs are denoted by square leaves labelled by either \texttt{H} or \texttt{T}. Any discarded blocks yield to a repetition of the procedure shown by circular nodes labelled by \texttt{R} that we shall call restart nodes. We use the convention that an edge from a parent node to a left child represents a $0$ and a $1$ for the right child.

\begin{figure}[ht]
\begin{centering}
\includegraphics{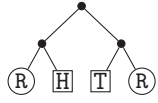}
\caption{Original von Neumann extractor using pair of consecutive bits}\label{FIG_orig_algo}
\end{centering}
\end{figure}

The correctness of the procedure follows from the fact that the events $\{X_1=0,X_2=1\}$ and $\{X_1=1,X_2=0\}$ are equally likely, that is, $\mathbf{P}\{X_1=0,X_2=1\}=pq=\mathbf{P}\{X_1=1,X_2=0\}$ where $q=1-p$. Repeating the procedure until success follows a geometric random process. Many random bits are discarded and are lost forever if we would simply repeat the original algorithm of Von Neumann. In an effort to maximize the use randomness, or entropy, contained in $\mathbf{X}$, blocks with different lengths is a natural strategy to build a code. For that, we may fix a maximal length for a codeword, say $m$ bits for a binary alphabet. Since in theory all codewords of lengths less than $m$ bits are admissible, then there is a maximum of $\sum_{i=0}^{m}{2^{m}}=2^{m+1}-1$ strings, vectors or codewords in our sampling code that we denote by $\mathcal{C}$. Every codeword $w\in\mathcal{C}$ is assigned the probability $p^{|w|}q^{|w|-\omega(w)}>0$, where $|w|$ denotes the length of $w$ and $\omega(w)$ is the number of non-zero elements (or equivalently said the Hamming weight). We do not challenge the completeness of probability spaces and necessarily we have as well $\sum_{w\in\mathcal{C}}{p^{|w|}q^{|w|-\omega(w)}}=1$. To be correct, we must have an encoding that partitions the codewords into three subsets that are identified with the likelihood to output a `head' (\texttt{H}), a `tail' (\texttt{T}) or to restart (\texttt{R}); therefore we have $\mathcal{C}=\mathcal{C}_{\texttt{H}}\cup\mathcal{C}_{\texttt{T}}\cup\mathcal{C}_{\texttt{R}}$ where the sets $\mathcal{C}_{\texttt{H}}$, $\mathcal{C}_{\texttt{T}}$ and $\mathcal{C}_{\texttt{R}}$ are disjoint and such that
\begin{align*}
\sum_{w\in\mathcal{C}_{\texttt{H}}}{p^{|w|}q^{|w|-\omega(w)}}&=\sum_{w\in\mathcal{C}_{\texttt{T}}}{p^{|w|}q^{|w|-\omega(w)}}\\
\sum_{w\in\mathcal{C}_{\texttt{R}}}{p^{|w|}q^{|w|-\omega(w)}}&=1-2\sum_{w\in\mathcal{C}_{\texttt{H}}}{p^{|w|}q^{|w|-\omega(w)}}.
\end{align*}

Let us see how to get a code with unequal lengths of blocks. If we repeat one more time the case shown on figure \ref{FIG_orig_algo}, then we have the situation represented by figure \ref{FIG_orig_algo_repeat_bis}.
\begin{figure}[ht]
\begin{centering}
\includegraphics{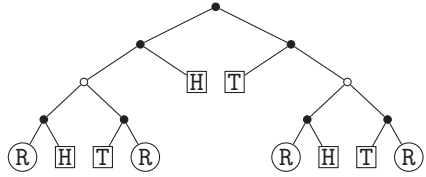}
\caption{One repetition of the extraction procedure shown on figure \ref{FIG_orig_algo}.}\label{FIG_orig_algo_repeat_bis}
\end{centering}
\end{figure}
The root of a repeated subtree is a small white circle. If we would repeat ad vitam aeternam, then leaves would have even depths only. Could we do better? The answer is yes, and as suggested before, by using blocks of different lengths. The four restart nodes on figure \ref{FIG_orig_algo_repeat_bis} could be replaced again, and nothing would be gained. We need to relabel some of the restart nodes while maintaining equal probability of the outcomes. We could remove simply the restart nodes in a symmetrical way, and this approach would leave the resulting tree with unary and binary nodes which is clearly not compressed. We replace the second and third restart node, reading from the left to the right, by \texttt{H} and \texttt{T}, respectively so that $\greentext{p^{2}qp}+\cyantext{p^{2}q^{2}}= \redtext{p^{2}q}$ and $\greentext{q^{2}pq}+\cyantext{q^{2}p^{2}}=\redtext{q^{2}p}$. In other words, we prune appropriately the tree represented on figure \ref{FIG_orig_algo_repeat_bis} and obtain the procedure represented on figure \ref{FIG_prune}.
\begin{figure}[ht]
\begin{centering}
\includegraphics{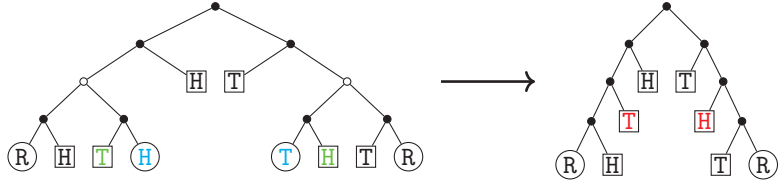}
\caption{Pruning of the tree from figure \ref{FIG_orig_algo_repeat_bis}. The transformation here allows to consume fewer bits on \emph{average} than on figure \ref{FIG_orig_algo_repeat_bis}.}\label{FIG_prune}
\end{centering}
\end{figure}

At this stage, we may wonder how many random bits from $\mathbf{X}$ are consumed on average whether we repeat the strategy based on figure \ref{FIG_orig_algo} or from the right side of figure \ref{FIG_prune}. For convenience, we denote by $\gamma_{\mathcal{C}}$ the equal likelihood of $\texttt{T}$ or $\texttt{H}$ for a given $\mathcal{C}$ which clearly depends on the \emph{unknown} $p$ and generally how we construct $\mathcal{C}$ as well, that is
\begin{displaymath}
\gamma_{\mathcal{C}}=\sum_{w\in\mathcal{C}_{\texttt{H}}}{p^{|w|}q^{|w|-\omega(w)}}=\sum_{w\in\mathcal{C}_{\texttt{T}}}{p^{|w|}q^{|w|-\omega(w)}}.
\end{displaymath}
Clearly if we restart $i>0$ times and succeed at the $(i+1)$-th time, then the expected number of bits consumed from $\mathbf{X}$ is $m\cdot i+H(\mathcal{C})$ where $m$ is the maximal length of a word from $\mathcal{C}$ and $H(\mathcal{C})$ is the average length of codewords. Equivalently, the average codeword length is the entropy of the probability distribution over $\mathcal{C}$. Let $N_{\mathcal{C}}$ be the random number of biased bits that are consumed. Then $N_{\mathcal{C}}$ is a geometric variable and we have
\begin{align}
\mathbf{E}(N_{\mathcal{C}})&=\sum_{i=0}^{\infty}{\big(m\cdot{i}+H(\mathcal{C})\big)(1-2\gamma_{\mathcal{C}})^{i}(2\gamma_{\mathcal{C}})}=\frac{m(1-2\gamma_{\mathcal{C}})}{2\gamma_{\mathcal{C}}}+H(\mathcal{C}).\label{expect_geom}
\end{align}
Asymptotically if $\mathcal{C}$ is designed to contain words of arbitrary lengths, that is $m$ is not bounded, then we must have that $\gamma_{C}\to \frac{1}{2}$ faster than $m\to\infty$ so that $\mathbf{E}(N_{\mathcal{C}})\to H(\mathcal{C})$. We will come back to the analysis of the expected complexity in section \ref{SECT_comp_stat}, and, more precisely, the analysis of $H(\mathcal{C})$. Sampling codes that are efficient necessarily minimizes $H(\mathcal{C})$.

It might be pedagogical to repeat the construction one more time by using the right tree on figure \ref{FIG_prune}. If we use the latter tree to represent our sampling code to build a new tree of height $8$ and pruning accordingly, then we have that $\greentext{p^{4}q^{4}}+\cyantext{p^{4}q^{3}p}= \redtext{p^{4}q^{3}}$ and $\greentext{q^{4}p^{4}}+\cyantext{q^{4}p^{3}q}=\redtext{q^{4}p^{3}}$. The resulting tree is the one displayed on the right of figure \ref{FIG_double_rep_prune}.
\begin{figure}[ht]
\begin{centering}
\includegraphics{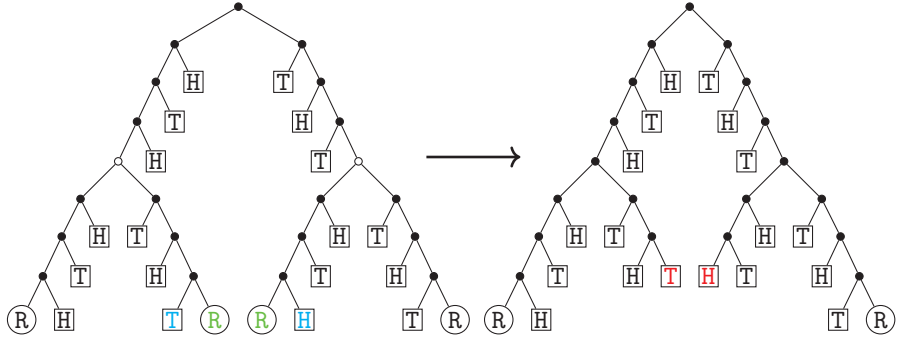}
\caption{Repeating the structure on the right side of figure \ref{FIG_prune} and pruning accordingly.}\label{FIG_double_rep_prune}
\end{centering}
\end{figure}

We make now a few observations about binary tree-based extraction algorithms. By the notation $\overline{w}$, we mean the word obtained from $w$ by flipping all of its bits.
\begin{remark}\label{rem_necessary_cond}
If a binary tree-based extraction algorithm satisfies the following conditions, then it is correct.
\begin{enumerate}
\item[1.] We have that $w\in\mathcal{C}_{\texttt{H}}\Leftrightarrow \overline{w}\in\mathcal{C}_{\texttt{T}}$, and $w\in\mathcal{C}_{\texttt{R}}\Leftrightarrow\overline{w}\in\mathcal{C}_{\texttt{R}}$. We can think of the previous equivalences as a type of symmetry.
\item[2.] Leaves must be labelled in an alternating way whenever walking along the leaves.
\end{enumerate}
The following conditions seem necessary for the algorithm to be efficient. We recall that Stout and Warren \cite{SouWar_1984} showed that no optimal algorithm exists.
\begin{enumerate}
\item[1.] The entropy $H(\mathcal{C})$ must be as small as possible. Equivalently, the expected height of the underlying tree must be as small as possible.
\item[2.] The $\#\mathcal{C}_{\texttt{R}}$ must be as small as possible, that is $2$.
\end{enumerate}
\end{remark}

Before detailing the general procedure, we introduce two symbols: $\mathcal{T}$ for the asymptotic tree and $\mathcal{T}_{k}$ for the tree of height $2^{k}$ obtained by trimming $\mathcal{T}$. The trees on figure \ref{FIG_orig_algo}, the right of figure \ref{FIG_prune}, and the right of figure \ref{FIG_double_rep_prune} represent therefore $\mathcal{T}_{1}$, $\mathcal{T}_{2}$ and $\mathcal{T}_{3}$, respectively. We now detail a procedure to construct $\mathcal{T}_{k}$ for arbitrary values of $k\geq 1$. More specifically, we obtain iteratively the codewords that corresponds to the binary representations of the leaves from $\mathcal{T}_{k}$. Before exhibiting the iterative method, we analyze the recursive nature of the problem as hopefully suggested from the previous figures.

Suppose that we know how to generate $\mathcal{T}_{k-1}$. Then let us use the knowledge of $\mathcal{T}_{k-1}$ to build $\mathcal{T}_{k}$ in the way as $\mathcal{T}_{2}$ is built from $\mathcal{T}_{1}$ on figure \ref{FIG_prune}, and as $\mathcal{T}_{3}$ is built from $\mathcal{T}_{2}$ on figure \ref{FIG_double_rep_prune}. The probability to output a symbol \texttt{H}, or equally likely a \texttt{T}, is denoted by $\gamma_{k}$ from now on. The resulting probability satisfy the following recurrence:
\begin{align}
\gamma_{k}&=(1+p^{2^{k-1}}+q^{2^{k-1}})\gamma_{k-1}+p^{2^{k-1}}q^{2^{k-1}}\quad\text{for $k\geq 2$},\label{rec_expr}\\
\gamma_{1}&=pq\nonumber
\end{align}
Expression (\ref{rec_expr}) must be symmetrical as a bivariate polynomial function. A different way to observe the symmetry is to use the bivariate generating function as in Flojolet and Sedgewick \cite{FlaSed_BOOK_2009} or Szpankowski \cite{Szp_BOOK_2001}. In essence, the generalization of the Von Neumann extractor proposed here hides a ternary structure. Indeed, the operational meaning of expression (\ref{rec_expr}) is as follow:
\begin{enumerate}
\item[1.] Consider the trimmed tree $\mathcal{T}_{k-1}$ of height $2^{k-1}$ with leftmost branch having probability $p^{2^{k-1}}$ and with rightmost branch having probability $q^{2^{k-1}}$. The leftmost and rightmost branches yield to discard the blocks $0^{2^{k-1}}$ and $1^{2^{k-1}}$, respectively.
\item[2.] Create $\mathcal{T}_{k}$ from $\mathcal{T}_{k-1}$ by linking to the latter two more copies of itself, one at the leftmost branch and one at rightmost branch. In this way, we observe that $\mathcal{T}_{k}$ has three copies of $\mathcal{T}_{k-1}$ which explain the term $(1+p^{2^{k-1}}+q^{2^{k-1}})$ in the expression (\ref{rec_expr}). As a result, we obtain indeed a symmetrical construction, but the tree has four restart nodes among which the leftmost and the rightmost are kept. What about the two branches in the middle? The answer is item $3$ right now.
\item[3.] To deal with the branches represented by the strings $0^{2^{k-1}}1^{2^{k-1}-1}0=s_{l}$ and $1^{2^{k-1}}0^{2^{k-1}-1}1=s_{r}$, we need to transform the tree while keeping the label consistent, that is alternating, and while maintaining symmetry. We therefore replace those two middle restart nodes with two nodes with branches represented by $0^{2^{k-1}}1^{2^{k-1}}=s'_{l}$ and $1^{2^{k-1}}0^{2^{k-1}}=s'_{r}$. Each new nodes contribute equally likely to the probabilities of outputting \texttt{H} or \texttt{T}, and that explains the $p^{2^{k-1}}q^{2^{k-1}}$ term in the expression (\ref{rec_expr}).
\end{enumerate}

We give here an iterative construction of the trimmed tree $\mathcal{T}_{k}$ where $k$ is the binary logarithm of the height. In order to proceed iteratively, let us expand now the expression (\ref{rec_expr}). Then we have
\begin{align}
\gamma_{k}&=\left(1+p^{2^{k-1}}+q^{2^{k-1}}\right)\gamma_{k-1}+p^{2^{k-1}}q^{2^{k-1}}\nonumber\\
&=p^{2^{k-1}}q^{2^{k-1}}+\sum_{i=0}^{k-2}{p^{2^{i}}q^{2^{i}}\prod_{j=i+1}^{k-1}{\left(1+p^{2^{j}}+q^{2^{j}}\right)}}\label{count_nb_terms_expr}\\
&=p^{2^{k-1}}q^{2^{k-1}}+\sum_{i=0}^{k-2}{p^{2^{i}}q^{2^{i}}\left(\sum_{\ell=0}^{3^{k-1-i}-1}{\left(\prod_{m=0}^{k-2-i}{\alpha_{\ell_{m}}^{2^{i+1+m}}}\right)}\right)}\label{bivar_poly_proto},
\end{align}
where $\alpha_{0}=1$, $\alpha_{1}=p$, $\alpha_{2}=q$, and $\ell_{m}$ is the $m$-th coefficient of the ternary expansion of $\ell$ such that $0\leq\ell < 3^{k-1-i}$.
\begin{lemma}\label{lem_nb_leaves}
There are exactly $3^{k-1}$ terms in the expansion of $\gamma_{k}$, and therefore there are $3^{k-1}$ leaves which are labelled by $\texttt{T}$, and similarly for $\texttt{H}$.
\end{lemma}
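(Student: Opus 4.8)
The plan is to read off the claim directly from the expansion (\ref{bivar_poly_proto}) of $\gamma_k$ by counting terms. Each summand of $\gamma_k$ corresponds, by the operational description preceding the lemma, to a leaf of $\mathcal{T}_k$ that outputs \texttt{T} (and symmetrically \texttt{H}), so it suffices to show that the right-hand side of (\ref{bivar_poly_proto}) is a sum of exactly $3^{k-1}$ monomials, all distinct and with coefficient $1$.

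First I would argue by induction on $k$ using the recurrence (\ref{rec_expr}). Let $t_k$ denote the number of monomials in the expansion of $\gamma_k$. The base case is $\gamma_1 = pq$, so $t_1 = 1 = 3^{0}$. For the inductive step, from $\gamma_k = (1 + p^{2^{k-1}} + q^{2^{k-1}})\gamma_{k-1} + p^{2^{k-1}} q^{2^{k-1}}$, distributing over the three-term factor produces $3 t_{k-1}$ monomials from the first summand plus one more from $p^{2^{k-1}}q^{2^{k-1}}$, giving at most $3 t_{k-1} + 1$ monomials; I would then show these are pairwise distinct so that no cancellation or collapsing occurs, which yields $t_k = 3 t_{k-1} + 1$. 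This recurrence with $t_1 = 1$ solves to $t_k = (3^{k}-1)/2$, which is \emph{not} $3^{k-1}$ — so the ``extra'' monomials coming from the $+p^{2^{k-1}}q^{2^{k-1}}$ and the tripling must be accounted for more carefully: the monomials arising from multiplying $\gamma_{k-1}$ by $1$ are exactly those already present at level $k-1$, and the relevant count for the claim is the number of \emph{leaves created at distinct positions}, i.e. the monomials picked up with factor $p^{2^{k-1}}q^{2^{k-1}}$ together with those multiplied by $p^{2^{k-1}}$ or $q^{2^{k-1}}$. The cleaner route, which I would actually take, is to count directly from (\ref{bivar_poly_proto}): the inner double sum over $i$ and $\ell$ has, for each fixed $i \in \{0,\dots,k-2\}$, exactly $3^{k-1-i}$ values of $\ell$, plus the single leading term $p^{2^{k-1}}q^{2^{k-1}}$, so the total monomial count is
\begin{align*}
1 + \sum_{i=0}^{k-2} 3^{k-1-i} = 1 + \sum_{j=1}^{k-1} 3^{j} = 1 + \frac{3^{k} - 3}{2} = \frac{3^{k} - 1}{2}.
\end{align*}
This still disagrees with $3^{k-1}$, so the resolution must be that the ternary expansions with $\ell_m = 0$ (the symbol $\alpha_0 = 1$) do not create new leaves — they correspond to the copy of $\mathcal{T}_{k-1}$ attached ``in place'' — and hence one should count only the $\ell$ whose ternary digits reach the bottom level, i.e. the contribution that genuinely multiplies $p^{2^{k-1}}q^{2^{k-1}}$. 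I would therefore reorganize the sum so that each leaf of $\mathcal{T}_k$ is counted once: a leaf is specified by choosing, at each of the $k-1$ ``doubling'' stages, whether it lies in the central copy (contributing a genuine $pq$-type factor) or in the left/right appended copy, which is exactly a length-$(k-1)$ string over a $3$-letter alphabet, giving $3^{k-1}$.

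The main obstacle will be making precise the bijection between monomials of $\gamma_k$ and \texttt{T}-leaves of $\mathcal{T}_k$ — specifically, showing that the three copies of $\mathcal{T}_{k-1}$ glued in at stage $k$ contribute leaf-monomials that are pairwise distinct (so the monomial count equals the leaf count with no merging), and that the relabeling of the two middle restart nodes described in item~3 preserves the leaf count rather than changing it. Concretely, I would fix the recurrence $t_k = 3 t_{k-1}$ with $t_1 = 1$ by identifying the ``$+p^{2^{k-1}}q^{2^{k-1}}$'' term as already absorbed into the tripling once one counts leaves rather than raw monomials of the polynomial $\gamma_k$ (the constant term $\gamma_1 = pq$ is the single stage-$1$ leaf, and each subsequent stage triples it), and then invoke the distinctness of the resulting branch-strings $s'_l, s'_r$ and the leftmost/rightmost structure from the operational description to conclude $t_k = 3^{k-1}$. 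The \texttt{H} count is then immediate by the symmetry $w \in \mathcal{C}_{\texttt{H}} \Leftrightarrow \overline{w} \in \mathcal{C}_{\texttt{T}}$ from Remark~\ref{rem_necessary_cond}, since bit-flipping is a bijection between the two leaf sets.
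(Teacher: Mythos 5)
Your final count is the right one, but the step that makes the answer $3^{k-1}$ rather than $(3^k-1)/2$ is precisely the step you leave open, and the two explanations you float for the discrepancy are not correct. The mismatch has nothing to do with the digits $\ell_m=0$ in (\ref{bivar_poly_proto}) ``not creating new leaves'': no selection of a subfamily of the $(3^k-1)/2$ raw monomials gives the count (for instance, keeping only the $\ell$ with all ternary digits nonzero yields $2^{k}-1$ terms, already wrong at $k=3$). The reduction comes not from discarding terms but from \emph{merging} them: the relabelled middle restart node, whose probability is the $+p^{2^{k-1}}q^{2^{k-1}}$ in (\ref{rec_expr}), is the sibling of the deepest labelled leaf of the attached copy, and the pruning replaces these two depth-$2^{k}$ leaves by their common parent at depth $2^{k}-1$; algebraically this is one application of $p+q=1$, exactly what the paper's induction does when it groups $p^{2^{i'}-1}q^{2^{i'}+1}$ with $p^{2^{i'}}q^{2^{i'}}$ to obtain $p^{2^{i'}-1}q^{2^{i'}}$, collapsing $3^{k-1-i'}(3^{i'}+1)$ terms back to $3^{k-1}$ at every stage (in total $(3^{k-1}-1)/2$ such pairwise merges separate the raw count from the leaf count). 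Relatedly, your statement that ``the relabeling of the two middle restart nodes preserves the leaf count'' is not right as stated: relabelling alone adds one \texttt{T}-leaf and one \texttt{H}-leaf; only the subsequent merge with the sibling restores the per-copy count to $t_{k-1}$. Without this ``$+1-1$'' accounting you have not justified that the leaf recurrence is $t_k=3t_{k-1}$ rather than $3t_{k-1}+1$, which, as you computed yourself, gives the wrong total.

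Once that single accounting step is supplied, your route does work: each of the three copies of $\mathcal{T}_{k-1}$ inside $\mathcal{T}_{k}$ (the central one and the two attached, modified ones) contains exactly $t_{k-1}$ \texttt{T}-leaves, so $t_k=3t_{k-1}$ with $t_1=1$, the ternary-string labelling of leaves becomes a genuine bijection, and the \texttt{H} case follows from the bit-flip symmetry of remark \ref{rem_necessary_cond}, as you say. That would be a legitimate combinatorial rephrasing of the paper's argument, which performs the same cancellation purely algebraically on the partial sums of (\ref{count_nb_terms_expr}); as submitted, however, the pivotal cancellation is missing and partly misdiagnosed, so the proposal is incomplete.
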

\begin{proof}
We proceed by induction. As the base case, we consider the first two terms from the summation in (\ref{count_nb_terms_expr}), that is, the terms corresponding to $i=0$ and $i=1$. We have that
\begin{align}
&pq\prod_{j=1}^{k-1}{\left(1+p^{2^{j}}+q^{2^{j}}\right)}+p^{2}q^{2}\prod_{j=2}^{k-1}{\left(1+p^{2^{j}}+q^{2^{j}}\right)}\nonumber\\
&=\left(\prod_{j=2}^{k-1}{\left(1+p^{2^{j}}+q^{2^{j}}\right)}\right)\left(pq(1+p^{2}+q^{2})+p^{2}q^{2}\right)\label{onara1_eq}\\
&=\left(\prod_{j=2}^{k-1}{\left(1+p^{2^{j}}+q^{2^{j}}\right)}\right)\left(pq+p^{2}q+pq^{3}\right)\label{onara2_eq}
\end{align}
The number of terms from expression (\ref{onara1_eq}) to (\ref{onara2_eq}) decreases from $3^{k-2}+3^{k-1}=3^{k-2}(3+1)$ to $3^{k-2}(3)=3^{k-1}$, respectively. For convenience, let us denote $\tau_{2}=pq+p^{2}q+pq^{3}$ so that expression (\ref{count_nb_terms_expr}) can be rewritten as
\begin{align*}
&p^{2^{k-1}}q^{2^{k-1}}+\left(\prod_{j=2}^{k-1}{\left(1+p^{2^{j}}+q^{2^{j}}\right)}\right)\tau_{2}+\sum_{i=1}^{k-2}{p^{2^{i}}q^{2^{i}}\prod_{j=i+1}^{k-1}{\left(1+p^{2^{j}}+q^{2^{j}}\right)}}
\end{align*}
For the induction step, suppose that the first $i'<k-1$ partial sums from the summation in expression (\ref{count_nb_terms_expr}) evaluate to
\begin{align}
&p^{2^{k-1}}q^{2^{k-1}}+\left(\prod_{j=i'}^{k-1}{\left(1+p^{2^{j}}+q^{2^{j}}\right)}\right)\tau_{i'}+\sum_{i=i'}^{k-2}{p^{2^{i}}q^{2^{i}}\prod_{j=i+1}^{k-1}{\left(1+p^{2^{j}}+q^{2^{j}}\right)}},\label{onara3_eq}
\end{align}
for some probability $\tau_{i'}$. From (\ref{onara3_eq}), we group the term $p^{2^{i'}-1}q^{2^{i'}+1}$ in $\tau_{i'}$ with the factor $p^{2^{i'}}q^{2^{i'}}$ in the summation. Then we obtain $p^{2^{i'}-1}q^{2^{i'}+1}+p^{2^{i'}}q^{2^{i'}}=p^{2^{i'}-1}q^{2^{i'}}$, and the number of terms decreases from $3^{k-1-i'}(3^{i'}+1)$ to $3^{k-1-i'}(3^{i'})=3^{k-1}$.
\end{proof}

\begin{remark}
The quantity $\tau_{i}$ in the proof of lemma \ref{lem_nb_leaves} is the halting probability with either a tail or head symbol for the algorithm represented by $\mathcal{T}_{i}$.
\end{remark}

On input $k>1$, algorithm \ref{ALG_get_strings_in_language} outputs all the strings representing the leaves in $\mathcal{T}_{k}$. In the algorithms, the generic pseudo-coding style operator $\pi$ accesses the coordinate of either a tuple, a string object or, in general, any ordered object. For instance $\pi_{1}(a,b)=a$ for a pair and $\pi_{3}(010)=0$ for a string. Also $|x|$ means the length or the size of an object and it should be clear from the context what type of object.
\begin{breakablealgorithm}\label{ALG_get_strings_in_language}
\caption{List representation of $\mathcal{T}_{k}$}
\begin{algorithmic}[1]
\raggedright
\Require An integer $k>1$ for the binary logarithm of the height for the trimmed tree.
\Ensure List $L$ of size $2\cdot 3^{k}$. $L_{i}$ is a pair for $0\leq i < 2\cdot3^{k-1}$. A pair is made of a string in its first coordinate and a boolean its second coordinate.
\State{Insert $(01,\texttt{H})$ to $L$}
\State{Insert $(10,\texttt{T})$ to $L$}
\For{$i=1$ \textbf{to} $k-1$}\Comment{That is $1\leq i < k$.}
\State{$A_{0}\leftarrow 0^{2^{i}}$}
\State{$A_{1}\leftarrow 1^{2^{i}}$}
\State{$L^{\star}\leftarrow L$}
\For{$\ell = 0$ \textbf{to} $|L^{\star}|-1$}\Comment{That is $0\leq \ell < |L^{\star}|$.}
\State{$S_{0}\leftarrow A_{0}\|\pi_{1}(L^{\star}_{i})$}\label{ALG_get_str_line_cat1}
\State{$S_{1}\leftarrow A_{1}\|\pi_{1}(L^{\star}_{i})$}\label{ALG_get_str_line_cat2}
\If{$|\pi_{1}(L^{\star}_{\ell})| = 2^{i}$ \textbf{and} $\pi_{1}(\pi_{1}(L^{\star}))=0$}
\State{Remove the last character from $S_{1}$.}\label{ALG_rem_char1}
\EndIf
\If{$|\pi_{1}(L^{\star}_{\ell})| = 2^{i}$ \textbf{and} $\pi_{1}(\pi_{1}(L^{\star}))=1$}
\State{Remove the last character from $S_{0}$.}\label{ALG_rem_char2}
\EndIf
\State{Insert $(S_0,\pi_{2}(L^{\star}))$ to $L$}
\State{Insert $(S_1,\pi_{2}(L^{\star}))$ to $L$}
\EndFor
\EndFor
\end{algorithmic}
\end{breakablealgorithm}

We introduce some notation. For a set $S$ (or a list) of strings and a string $y$, then $yS$ denotes the set $\{y\|z\colon z\in S\}$. If $S$ is a set of pairs $(z,b)$ where $z$ is a string and $b$ is a single bit, then, for convenience, we write $yS=\{(yz,b)\colon (y,b)\in S\}$.

\begin{theorem}
Algorithm \ref{ALG_get_strings_in_language} outputs iteratively the list representation of $\mathcal{T}_{k}$.
\end{theorem}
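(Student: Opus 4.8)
The plan is to prove, by induction on the pass number $i$, the loop invariant: \emph{after the $i$-th execution of the body of the outer \textbf{for} loop (with $i=0$ meaning ``immediately after the two initialization insertions''), $L$ is a list representation of $\mathcal T_{i+1}$}, i.e.\ the multiset of entries of $L$ is exactly the set of codewords of the non-restart leaves of $\mathcal T_{i+1}$, each paired with its label; taking $i=k-1$ gives the theorem. The case $i=0$ is immediate, since the initialization yields $L=[(01,\mathtt H),(10,\mathtt T)]$, which is the leaf list of $\mathcal T_1$ (Figure~\ref{FIG_orig_algo}). Along the way one also records $|L|=2\cdot 3^{i}$ after the $i$-th pass, which matches Lemma~\ref{lem_nb_leaves} and settles the size statement in the \textbf{Output} line.

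The driver of the induction is an auxiliary structural fact, itself proved by a parallel induction on $j$ using the recursive description of $\mathcal T_j$ given in the three items following~(\ref{rec_expr}): \emph{in $\mathcal T_j$ every codeword has length at most $2^{j}$, and exactly two codewords attain length $2^{j}$, namely the leftmost leaf $0^{2^{j}-1}1$ (labelled $\mathtt H$) and the rightmost leaf $1^{2^{j}-1}0$ (labelled $\mathtt T$).} The base $j=1$ is read off Figure~\ref{FIG_orig_algo}. For the step, $\mathcal T_j$ is the disjoint union of the middle copy of $\mathcal T_{j-1}$ (all codewords of length $<2^{j}$); a copy prefixed by $0^{2^{j-1}}$ in which the rightmost leaf $1^{2^{j-1}-1}0$ is shortened to $1^{2^{j-1}-1}$ while the adjacent restart node $0^{2^{j-1}}1^{2^{j-1}}=s'_{l}$ is deleted (its unique length-$2^{j}$ codeword being $0^{2^{j-1}}\|0^{2^{j-1}-1}1=0^{2^{j}-1}1$); and the mirror copy prefixed by $1^{2^{j-1}}$, with unique length-$2^{j}$ codeword $1^{2^{j}-1}0$; labels are inherited, and by Remark~\ref{rem_necessary_cond} the alternation and the $w\leftrightarrow\overline w$ symmetry persist. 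The two shortened codewords are precisely the ``promoted'' leaves that take the place of the deleted middle restart nodes $s'_{l},s'_{r}$ of~(\ref{rec_expr}); this is exactly the transformation illustrated by Figures~\ref{FIG_prune} and~\ref{FIG_double_rep_prune}.

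The inductive step is then bookkeeping. At the $i$-th pass $L^{\star}=L$ is, by hypothesis, the leaf list of $\mathcal T_{i}$; its entries are never removed, so they persist as the middle copy of $\mathcal T_i$ inside the eventual $\mathcal T_{i+1}$. For each $(w,b)\in L^{\star}$ the inner loop appends $(0^{2^{i}}\|w,\,b)$ and $(1^{2^{i}}\|w,\,b)$, except that (reading lines~\ref{ALG_get_str_line_cat1}--\ref{ALG_rem_char2} with the obvious correction of the stray subscripts to $\ell$) the last character of $0^{2^{i}}\|w$ is deleted exactly when $|w|=2^{i}$ and $w$ begins with $1$, and the last character of $1^{2^{i}}\|w$ is deleted exactly when $|w|=2^{i}$ and $w$ begins with $0$. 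By the structural fact the former holds for the single codeword $w=1^{2^{i}-1}0$ with $b=\mathtt T$, producing $(0^{2^{i}}1^{2^{i}-1},\mathtt T)$, and the latter for the single codeword $w=0^{2^{i}-1}1$ with $b=\mathtt H$, producing $(1^{2^{i}}0^{2^{i}-1},\mathtt H)$. Hence the appended entries form precisely the $0^{2^{i}}$-prefixed copy of $\mathcal T_i$ with its rightmost leaf shortened, together with the $1^{2^{i}}$-prefixed copy with its leftmost leaf shortened, i.e.\ the two non-middle copies in the recursive description of $\mathcal T_{i+1}$. Therefore after the pass $L$ lists exactly the leaves of $\mathcal T_{i+1}$, which closes the induction.

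The main obstacle is the structural fact, together with the verification that the two \textbf{if} guards of Algorithm~\ref{ALG_get_strings_in_language} fire on the leftmost and rightmost leaves of $\mathcal T_{i}$ and on nothing else; once that is settled the remainder is a line-by-line comparison of the inner loop with the construction behind~(\ref{rec_expr}). Two cosmetic points should be noted: several subscripts in lines~\ref{ALG_get_str_line_cat1}--\ref{ALG_rem_char2} read $L^{\star}$ or $L^{\star}_{i}$ where $L^{\star}_{\ell}$ is intended, and ``list representation'' is interpreted here as equality of the underlying set of (codeword, label) pairs; if a fixed left-to-right ordering were also asserted one would additionally record, in the same induction, that each pass emits the middle copy first and then, codeword by codeword, the $0$-prefixed entry followed by the $1$-prefixed entry.
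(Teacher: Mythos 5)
Your proof is correct and takes essentially the same route as the paper's: an induction over the outer-loop passes in which lines \ref{ALG_get_str_line_cat1}--\ref{ALG_get_str_line_cat2} realize the appending of the two prefixed copies of $\mathcal{T}_{k-1}$ and lines \ref{ALG_rem_char1}--\ref{ALG_rem_char2} realize the pruning of the recursive construction. The auxiliary structural fact you isolate (exactly two codewords of maximal length $2^{j}$ in $\mathcal{T}_{j}$, namely $0^{2^{j}-1}1$ and $1^{2^{j}-1}0$, which is what makes the two \textbf{if} guards fire exactly twice per pass) is used only implicitly in the paper's proof, so your write-up is a more detailed rendering of the same argument.
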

\begin{proof}
The proof is by induction. Initially for the base step, $L_{1}$ and $L_{2}$ contain the pairs $(01,\texttt{H})$ and $(10,\texttt{T})$ that encode $\mathcal{T}_{2^{1}}$. For the inductive step, suppose the cells $L_{i}$ for $1\leq i\leq 2\cdot 3^{j-1}$ are pairs that encode $\mathcal{T}_{j}$ for $j<k$. Because the tree $\mathcal{T}_{k}$ is made by appending $\mathcal{T}_{k-1}$ to itself twice, one copy hanging on the leftmost branch and another copy hanging on the rightmost branch followed by the pruning step. We recall (1) and (2) from remark \ref{rem_necessary_cond} and emphasize that appending keeps alternating the order of heads and tails and keeps equal the number of heads and tails at every level. From a data structure and operational perspective, the appending step is equivalent to compute $0^{2^{k-1}}L_{k-1}\cup 1^{2^{k-1}}L_{k-1}$ which is achieved by lines (\ref{ALG_get_str_line_cat1}) and (\ref{ALG_get_str_line_cat2}). The list $0^{2^{k-1}}L_{k-1}\cup 1^{2^{k-1}}L_{k-1}$ contains only two elements of length $2^{k}$, and those strings are exactly those leaves that are pruned. In terms of strings, the pruning simply corresponds to removing the last character which is achieved by lines (\ref{ALG_rem_char1}) and (\ref{ALG_rem_char2}).
\end{proof}

Clearly it holds that $\gamma_{k-1}<\gamma_{k}$ for all $k\geq 1$ and that $2\gamma_{k}<1$. Therefore we have that $\gamma_{k}$ converges as $k$ tends to infinity since it forms a bounded sequence of increasing terms.

\begin{lemma}
We have that
\begin{displaymath}
\lim_{k\to\infty}{\gamma_{k}}=\frac{1}{2}.
\end{displaymath}
In other words, the asymptotic tree $\mathcal{T}$ covers the whole interval $(0,1)$.
\end{lemma}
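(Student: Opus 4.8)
The plan is to pin the limit down to exactly $1/2$ by producing a closed form for $\gamma_k$. The excerpt has already observed that $(\gamma_k)_{k\geq 1}$ is strictly increasing and satisfies $2\gamma_k<1$, hence it converges to some $L\in(0,1/2]$; what remains is to exclude $L<1/2$, and for this it suffices to show that $1-2\gamma_k\to 0$.

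First I would guess, and then verify by induction on $k$, the identity
\[
1-2\gamma_k=p^{2^{k}}+q^{2^{k}},\qquad k\geq 1 .
\]
The base case $k=1$ is immediate, since $1-2\gamma_1=1-2pq=(p+q)^2-2pq=p^2+q^2$. For the inductive step, assume $1-2\gamma_{k-1}=p^{2^{k-1}}+q^{2^{k-1}}$, i.e.\ $\gamma_{k-1}=\tfrac12\bigl(1-p^{2^{k-1}}-q^{2^{k-1}}\bigr)$, and substitute this into the recurrence~(\ref{rec_expr}). Writing $a=p^{2^{k-1}}$ and $b=q^{2^{k-1}}$ for brevity, a one-line computation gives
\[
1-2\gamma_k=(1+a+b)(a+b)-\bigl(a+b+2ab\bigr)=(a+b)^2-2ab=a^2+b^2=p^{2^{k}}+q^{2^{k}},
\]
which closes the induction. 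This closed form is also the one predicted by the operational picture: $1-2\gamma_k$ is the total restart probability of $\mathcal{T}_k$, and after the pruning described in items~1--3 the only surviving restart branches of $\mathcal{T}_k$ are the leftmost $0^{2^{k}}$ and the rightmost $1^{2^{k}}$, with probabilities $p^{2^{k}}$ and $q^{2^{k}}$.

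Finally, since $0<p<1$ and $0<q=1-p<1$, both $p^{2^{k}}\to 0$ and $q^{2^{k}}\to 0$ as $k\to\infty$; hence $1-2\gamma_k=p^{2^{k}}+q^{2^{k}}\to 0$, i.e.\ $\gamma_k\to 1/2$. Equivalently, the restart probability of $\mathcal{T}_k$ vanishes in the limit, which is the precise sense in which the asymptotic tree $\mathcal{T}$ covers all of $(0,1)$.

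I do not expect a real obstacle here: the only delicate point is recognizing the correct closed form $1-2\gamma_k=p^{2^{k}}+q^{2^{k}}$, after which the verification is the short algebraic identity above and the limit is trivial. If one preferred to avoid guessing the closed form, an alternative is to set $\delta_k=\tfrac12-\gamma_k>0$ and derive from~(\ref{rec_expr}) the inequality $\delta_k\leq\bigl(1+p^{2^{k-1}}+q^{2^{k-1}}\bigr)\delta_{k-1}-\tfrac12\bigl(p^{2^{k-1}}+q^{2^{k-1}}\bigr)$, then iterate and use convergence of $\sum_k\bigl(p^{2^{k}}+q^{2^{k}}\bigr)$; but the closed-form route is cleaner and gives the stronger statement.
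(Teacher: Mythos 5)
Your proof is correct and takes essentially the same route as the paper: both arguments establish the closed form $2\gamma_k+p^{2^{k}}+q^{2^{k}}=1$ by induction on the recurrence (\ref{rec_expr}) and then let $k\to\infty$. Your substitution with $a=p^{2^{k-1}}$, $b=q^{2^{k-1}}$ is merely a more compact version of the paper's term-grouping in the inductive step, so there is nothing to add.
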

\begin{proof}
The proof is by induction. Given the paragraph that precedes the lemma, we need only to show that $2\gamma_{k}+p^{2^{k}}+q^{2^{k}}=1$ for all $k\geq 1$ since the latter two terms tend to zero as $k$ tends to infinity. For the base case with $k=1$,
we observe that
\begin{displaymath}
2\gamma_{1}+p^{2^{1}}+q^{2^{1}}=2pq+p^{2}+q^{2}=(p+q)^{2}=1.
\end{displaymath}
Let us assume for the inductive step that $2\gamma_{j}+p^{2^{j}}+q^{2^{j}}=1$ for $1\leq j<k$. Then we have
\begin{align*}
2\gamma_{k}&=\left(\left(1+p^{2^{k-1}}+q^{2^{k-1}}\right)\gamma_{k-1}+p^{2^{k-1}}q^{2^{k-1}}\right)+\\
&\qquad \left(\left(1+p^{2^{k-1}}+ q^{2^{k-1}}\right)\gamma_{k-1}+p^{2^{k-1}}q^{2^{k-1}}\right)\\
&=2\gamma_{k-1}+2\gamma_{k-1}\left(p^{2^{k-1}}+q^{2^{k-1}}\right)+2p^{2^{k-1}}q^{2^{k-1}}.
\end{align*}
By adding $p^{2^{k}}+q^{2^{k}}$ to the previous equality, we then obtain
\begin{align}
2\gamma_{k}+p^{2^{k}}+q^{2^{k}}&=2\gamma_{k-1}+\left(p^{2^{k-1}}+q^{2^{k-1}}\right)\left(2\gamma_{k-1}\right)+\nonumber\\
&\qquad p^{2^{k-1}}\left(p^{2^{k-1}}+q^{2^{k-1}}\right)+q^{2^{k-1}}\left(p^{2^{k-1}}+q^{2^{k-1}}\right)\nonumber\\
&=2\gamma_{k-1}+\nonumber\\
&\qquad p^{2^{k-1}}\left(2\gamma_{k-1}+p^{2^{k-1}}+q^{2^{k-1}}\right)+\label{eq_boulette1}\\
&\qquad q^{2^{k-1}}\left(2\gamma_{k-1}+p^{2^{k-1}}+q^{2^{k-1}}\right).\label{eq_boulette2}
\end{align}
The induction hypothesis implies that terms within parentheses of (\ref{eq_boulette1}) and (\ref{eq_boulette2}) are both equal to $1$. Then we use the induction hypothesis one more time to complete the proof.
\end{proof}

By sorting the outputs of algorithm \ref{ALG_get_strings_in_language} into ascending order of lengths, the extraction process can be sped up by using an array with random access to lists containing strings of the same lengths so that to yield an iterative algorithm. Lemma \ref{LEM_no_str_cong_1_mod4} is therefore useful to speed up the extraction given strings in the language defined by the extraction problem.

\begin{lemma}\label{LEM_no_str_cong_1_mod4}
There is no string $x\in\{0,1\}^{\mathbb{N}}$ that encodes a leaf in $\mathcal{T}$ such that $|x|\equiv 1\bmod{4}$.
\end{lemma}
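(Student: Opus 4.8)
The plan is to push the statement down from the infinite tree $\mathcal{T}$ to the finite trees $\mathcal{T}_k$ and then induct on $k$. The first step is the observation that the passage from $\mathcal{T}_{k-1}$ to $\mathcal{T}_k$ --- equivalently, one iteration of the outer loop of Algorithm \ref{ALG_get_strings_in_language} --- never destroys a leaf: it keeps the entire ``middle copy'' of $\mathcal{T}_{k-1}$ and only refines the two restart nodes $0^{2^{k-1}}$ and $1^{2^{k-1}}$. Hence every codeword of $\mathcal{T}_{k-1}$ is a codeword of $\mathcal{T}_k$, every (finite-length) codeword of $\mathcal{T}$ already appears in some $\mathcal{T}_k$, and it suffices to prove that for every $k\ge 1$ each leaf of $\mathcal{T}_k$ has length $\not\equiv 1\pmod 4$.

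Next I would induct on $k$. The base cases $k=1$ and $k=2$ are checked by hand: $\mathcal{T}_1$ has the two codewords $01,10$ of length $2$, and running the algorithm for $k=2$ (or reading figure \ref{FIG_prune}) gives codewords of lengths $2,3,4$; none of $2,3,4$ is $\equiv 1\pmod 4$. For the inductive step take $k\ge 3$ and assume every leaf of $\mathcal{T}_{k-1}$ has length $\not\equiv 1\pmod 4$. By the three-copy description of the construction, each leaf of $\mathcal{T}_k$ is of one of three kinds: (i) a leaf $w$ of the middle copy of $\mathcal{T}_{k-1}$, with $|w|\not\equiv 1\pmod 4$ by the induction hypothesis; (ii) an unpruned string $0^{2^{k-1}}w$ or $1^{2^{k-1}}w$ with $w$ a leaf of $\mathcal{T}_{k-1}$, of length $2^{k-1}+|w|$; (iii) one of the (at most two) strings obtained from $0^{2^{k-1}}w$ or $1^{2^{k-1}}w$ by deleting the final bit, which by the pruning guard in the algorithm occurs only when $|w|=2^{k-1}$, so of length $2^{k-1}+2^{k-1}-1=2^k-1$. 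Since $k\ge 3$ makes $2^{k-1}\equiv 0\pmod 4$, case (ii) yields length $\equiv |w|\not\equiv 1\pmod 4$, and case (iii) yields length $2^k-1\equiv 3\pmod 4$; thus no leaf of $\mathcal{T}_k$ has length $\equiv 1\pmod 4$, closing the induction and the proof.

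The only delicate points are bookkeeping rather than substance: one must verify that the pruning step modifies exactly the words $0^{2^{k-1}}w$, $1^{2^{k-1}}w$ with $|w|=2^{k-1}$ and removes exactly one bit from each --- this is precisely the guard $|\pi_1(L^\star_\ell)|=2^{i}$ in Algorithm \ref{ALG_get_strings_in_language} --- and one must keep the residue $2^{k-1}\bmod 4$ straight, which is why the case $k=2$ (where $2^{k-1}=2$) is peeled off as a base case. The main obstacle I anticipate is making the reduction ``leaves of $\mathcal{T}$ equal $\bigcup_k$ leaves of $\mathcal{T}_k$'' fully rigorous: one should argue that the asymptotic construction introduces no codeword absent at every finite stage, and that the only parts of $\mathcal{T}_k$ ever refined are its two restart nodes. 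Once that is granted, the argument is entirely the elementary modular computation above.
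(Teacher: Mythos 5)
Your proposal is correct and follows essentially the same route as the paper: an induction on $k$ with base case $k=2$, using the facts that $2^{k-1}\equiv 0\pmod 4$ for $k\geq 3$ and that the pruned leaves sit at depth $2^k-1\equiv 3\pmod 4$; the paper merely phrases the same argument through the exponent pairs $(i,j)$ of the bivariate expansion of $\gamma_k$ rather than directly through codeword lengths. Your explicit reduction from $\mathcal{T}$ to the finite trees $\mathcal{T}_k$ is a point the paper leaves implicit, but it is a harmless (and correct) addition, since each step only refines the two extreme restart nodes.
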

\begin{proof}
The proof is yet by induction. This time, we recall bivariate expression (\ref{bivar_poly_proto}). There are integers $c_{ij}\geq 0$ such that $c_{ij}=0$ whenever $i+j>2^{k}$ for which
\begin{displaymath}
\gamma_{k}=\sum_{i=0}^{3^{k}-1}\sum_{j=0}^{3^{k}-1}{c_{ij}p^{i}q^{j}}.
\end{displaymath}
We need to prove that there is no pair of indices $(i,j)$ such that $i+j\equiv 1\bmod{4}$.

For the base case, we start at $k=2$. Since $\gamma_{2}=p^{3}q+pq+q^{2}p$, then clearly $3+1$, $1+1$, and $2+1$ are congruent to $0,2,3$ modulo $4$. For the inductive step, suppose that $\gamma_{\ell}$ contains no term $p^{i}q^{j}$ such $i+j\equiv 1\bmod{4}$ for $1< \ell\leq k-1$. Then both $p^{2^{k-1}}p^{i}q^{j}$ and $q^{2^{k-1}}p^{i}q^{j}$ yields $2^{k-1}+i+j\equiv i+j\not\equiv 1\bmod{4}$. We observe that the pruning step affects only the two leaves having depth $2^{k}$ which are replaced by one leaf with depth $2^{k}-1$. Since $2^{k}-1\equiv 3\bmod{4}$, then we are fine.
\end{proof}

Now suppose that we have an array $A$ indexed by $1\leq i\leq 2^{k}$. Each index is the length of strings representing leaves in $\mathcal{T}_{k}$. Therefore $A_{i}$ points to the list of strings of length $i$. By lemma \ref{LEM_no_str_cong_1_mod4}, we skip $A_{i}$ with $i\equiv 1\bmod{4}$. Algorithm \ref{ALG_extratorchon} also outputs the depth at which the leaf is located in $\mathcal{T}_{k}$ for this shall be useful later in section \ref{SECT_comp_stat}. For a fixed $k$, which we recall is the height of the binary logarithms of the trimmed asymptotic tree $\mathcal{T}$, the following extraction procedure may yield an output shorter than expected, and possibly empty, if the input string is too biased and $k$ too small.

\begin{breakablealgorithm}\label{ALG_extratorchon}
\caption{An iterative generalization of the Von Neumann extractor}
\begin{algorithmic}[1]
\raggedright
\Require A string $\mathbf{X}$ made of i.i.d.~biased random bits.
\Require An integer $k>1$ for the binary logarithm of the height for the trimmed tree.
\Require An array $A$ of list of strings as described above. $A_{i}$ denotes the $i$-th list. $A_{i,j}$ denotes the $j$-th element of the $i$-th list.
\Ensure A list $L$ of pairs of type $(\text{boolean},\text{positive integer})$. The first coordinate is an unbiased bit and the second coordinate the depth at which the corresponding leaf is located in $\mathcal{T}_{k}$.
\State{$i\leftarrow 0$}\Comment{Number of characters read from $\mathbf{X}$}
\State{$r\leftarrow 0$}\Comment{The class representative modulo $4$}
\State{$y\leftarrow \varepsilon$}\Comment{Substring of $\mathbf{X}$ that grows in size until we know which of the $A_{i}$'s it belongs.}
\State{$L\leftarrow \emptyset$}
\Repeat
\State{\verb!BEGINLOOP!}\Comment{Label}
\If{$i\leq |\mathbf{X}|$}
\State{\textbf{Break} out the loop}
\EndIf
\State{$c_0\leftarrow x_{i}$}
\State{$c_1\leftarrow x_{i+1}$}\Comment{Read two characters in virtue of lemma \ref{LEM_no_str_cong_1_mod4}}
\For{$j=0$ \textbf{to} $|A_{2+4r}|$}
\If{$y\|c_0\|c_1 = \pi_{1}(A_{2+4r,j})$}\Comment{Is $y\|c_0\|c_1\in A_{2+4r}$?}
\State{Insert $(\pi_{2}(A_{2+4r,j}),2+4r)$ to $L$.}
\State{$y\leftarrow \varepsilon$}
\State{$i\leftarrow i+2$}
\State{$r\leftarrow 0$}
\State{\textbf{Goto} \verb!BEGINLOOP!}
\EndIf
\EndFor
\State{$y\leftarrow y\|c_{0}\|c_{1}$}
\State{$i\leftarrow i+2$}
\State{$c_2\leftarrow x_{i+2}$}\Comment{Read one bit}
\For{$j=0$ \textbf{to} $|A_{3+4r}|$}
\If{$y\|c_2 = \pi_{1}(A_{3+4r,j})$}\Comment{Is $y\|c_0\|c_1\|c_2\in A_{3+4r}$?}
\State{Insert $(\pi_{2}(A_{3+4r,j}),3+4r)$ to $L$.}
\State{$y\leftarrow \varepsilon$}
\State{$i\leftarrow i+1$}
\State{$r\leftarrow 0$}
\State{\textbf{Goto} \verb!BEGINLOOP!}
\EndIf
\EndFor
\State{$y\leftarrow y\|c_{2}$}
\State{$i\leftarrow i+1$}
\State{$c_3\leftarrow x_{i+3}$}\Comment{Read one bit}
\For{$j=0$ \textbf{to} $|A_{4+4r}|$}
\If{$y\|c_3 = \pi_{1}(A_{4+4r,j})$}\Comment{Is $y\|c_0\|c_1\|c_2\|c_{3}\in A_{4+4r}$?}
\State{Insert $(\pi_{2}(A_{4+4r,j}),4+4r)$ to $L$.}
\State{$y\leftarrow \varepsilon$}
\State{$i\leftarrow i+1$}
\State{$r\leftarrow 0$}
\State{\textbf{Goto} \verb!BEGINLOOP!}
\EndIf
\EndFor
\State{$y\leftarrow y\|c_{3}$}
\State{$i\leftarrow i+1$}
\State{$r\leftarrow r+1$}\Comment{Once here, increment $r$. Also we have $|y|\equiv 0\bmod{4}$.}
\Until{$4r+3<|A|$}\Comment{If we use $4r+4$, then we may throw an out-of-bound exception.}
\end{algorithmic}
\end{breakablealgorithm}

Hopefully the reader is convinced at this stage of the correctness of algorithm \ref{ALG_extratorchon}. The probability that $4r+3\geq |A|$ is $p^{2^{k}}+q^{2^{k}}$. From a pragmatic point of view, for not too big $k$, algorithm \ref{ALG_extratorchon} almost never hits a subsequence of $\mathbf{X}$ that is not contained in any of the $A_{i}$'s, and therefore yield an output of the expected length.

To end this section, table \ref{TAB_timing} mentions the time of extraction for input sequences of length $2^{26}$ bits with different bias. The mean depth column is the average number of biased bits needed to produce one unbiased bit. The mean depth is therefore the sum over all depths divided by the output length of the extracted string. The binary logarithm of the height of the trimmed tree was set to $k=10$. It is almost impossible with the values of $p$ shown in table \ref{TAB_timing} to obtain $2^{10}$ consecutive zeros or $2^{10}$ consecutive ones that would force algorithm \ref{ALG_extratorchon} to output shorter list than expected. The standard library of the C++ programming language is used to implement the previous algorithms that can be found at \url{https://github.com/63EA13D5/}.

\begin{centering}
\renewcommand*{\arraystretch}{1.25}
\begin{longtable}{|c|c||c|c|c|}\caption{Extraction time in milliseconds and mean depth with $k=10$}\label{TAB_timing}\\
\hline
$p$ & Input length & Output length & Time & Mean depth\\
\hline
\endfirsthead
\hline
$p$ & Input length & Output length & Time & Mean depth\\
\hline
\endhead
\multicolumn{5}{r}{\boldgraytext{Continued on next page}}
\endfoot
\endlastfoot
\hline
$0.51$ & $2^{26}$ & $21624098$ & $3227.5877$ & $3.103429$ \\
\hline
$0.525$ & $2^{26}$ & $21557549$ & $2756.1444$ & $3.113010$ \\
\hline
$1-e^{-1}$ & $2^{26}$ & $19634433$ & $2553.9419$ & $3.417917$\\
\hline
$1/\sqrt{2}$ & $2^{26}$  & $16916921$ & $3698.6932$ & $3.966967$\\
\hline
$0.96875$ & $2^{26}$ & $2079646$ & $20926.1112$ & $32.269355$\\
\hline
\end{longtable}
\end{centering}

The mean depth is an empirical estimation of the expected complexity that we study next. With respect to the entries from table \ref{TAB_timing}, the mean depth is $2^{-26}\sum_{i=1}^{2^{26}}{Y_{i}}$ where $Y_i$ is the number of biased bits need to produce $B_i$. The mean depth is comparable with the ratio of the input length divided by the output length.

\section{Expected complexity}\label{SECT_comp_stat}

The main question of this section is how many input bits from $\mathbf{X}$ does algorithm \ref{ALG_extratorchon} need in order to extract one unbiased bit $B$, or more precisely the first coordinate of a pair $(B,Y)\in L$? The question is therefore what is $\mathbf{E}(Y)$? Clearly as $p\to 1$, we should expect that $Y\to\infty$ so must be $\mathbf{E}(Y)$.

Suppose for a while that \ref{ALG_extratorchon} runs with $k=\infty$ on some biased input $\mathbf{X}$ and that we stop as soon as one bit $B$ together with $Y$ is obtained. Then what is $\mathbf{E}(Y)$? We study the asymptotic quantity $\mathbf{E}(Y)$ through the sequence $(Y_1,Y_2,\ldots, Y_k,\ldots)$ where $\mathbf{E}(Y_{k})$ is the expected height of $\mathcal{T}_{k}$ which is the expected number of bits from $\mathbf{X}$ consumed by algorithm \ref{ALG_extratorchon} for a finite $k$. By the Lebesgue's dominated convergence theorem and fixed $p$, the quantity $\mathbf{E}(Y_{k})$ converges.

We recall the expression for $\gamma_{k}$ which is
\begin{align}
\gamma_{k}&=p^{2^{k-1}}q^{2^{k-1}}+\sum_{i=0}^{k-2}{p^{2^{i}}q^{2^{i}}\left(\sum_{\ell=0}^{3^{k-1-i}-1}{\left(\prod_{m=0}^{k-2-i}{\alpha_{\ell_{m}}^{2^{i+1+m}}}\right)}\right)}\nonumber\\
&=\sum_{i=0}^{3^{k}-1}\sum_{j=0}^{3^{k}-1}{c_{ij}p^{i}q^{j}},\label{poly_rep_H}
\end{align}
such that $c_{ij}=0$ whenever $i+j>2^{k}$, and where $\alpha_{0}=1$, $\alpha_{1}=p$ and $\alpha_{2}=q$. By definition, we have
\begin{displaymath}
\mathbf{E}(Y_{k})=\sum_{y=0}^{2^{k}}{y\mathbf{P}\{Y_k=y\}}\quad\text{where}\quad\mathbf{P}\{Y_k=y\}=\sum_{\substack{(i,j)\in\mathbb{N}^{2}\\i+j=y}}{c_{ij}p^{i}q^{j}}.
\end{displaymath}
The coefficients $\ell_{m}$ from the binary expansion of $\ell=(\ell_{0},\ldots,\ell_{k-2-i})$ such that $\ell_{m}=0$ do not contribute to $Y_{k}$ because $\alpha_{0}=1$.

We could use the output from algorithm \ref{ALG_get_strings_in_language}, store each strings with respect to their lengths, and compute probabilities using the Hamming weight (the count of the number of non-zero elements). We proceed slightly differently by computing directly all pairs $(i,j)$ in the bivariate polynomial expression of $\gamma_{k}$ given above.
\begin{breakablealgorithm}\label{ALG_get_prob_depth}
\caption{Computing pairs $(i,j)$ for the bivariate expansion of $\gamma_{k}$}
\begin{algorithmic}[1]
\raggedright
\Require An integer $k>1$.
\Ensure A list (or vector) $L$ of $k$ lists $L_{t}$ for $1\leq t\leq k$ such that $L_{t}$ contains all pairs $(i,j)$ appearing in $\gamma_{t}$.
\State{$L\leftarrow \emptyset$}
\State{Insert $(1,1)$ to $L_{1}$}\Comment{The only pair in $\mathcal{T}_{1}$ which has height $2^{1}$.}
\State{Insert $L_{1}$ to $L$}
\For{$t=2$ \textbf{to} $k$}
\State{$L'\leftarrow L_{t-1}$}\Comment{All pairs appearing in the expression of $\gamma_{t-1}$ also appears in $\gamma_{t}$.}
\For{$s=0$ \textbf{to} $|L'|$}
\State{$i_1\leftarrow \pi_{1}(L'_s)+2^{t-1}$}
\State{$j_1\leftarrow \pi_{2}(L'_s)$}\label{just_before_pruning_left}
\If{ $i_1=2^{t-1}-1$ \textbf{and} $j_1=2^{t-1}+1$}\Comment{This corresponds to pruning the left subtree.}\label{just_before_pruning_right}
\State{$j_1\leftarrow j_1-1$}
\EndIf
\State{$i_2\leftarrow \pi_{1}(L'_s)$}
\State{$j_2\leftarrow \pi_{2}(L'_s)+2^{t-1}$}
\If{ $i_2=2^{t-1}-1$ \textbf{and} $j_2=2^{t-1}+1$}\Comment{This corresponds to pruning the right subtree.}\label{if_statement_left}
\State{$j_2\leftarrow j_2-1$}
\EndIf
\State{Insert $(i_1,j_1)$ to $L'_s$}
\State{Insert $(i_2,j_2)$ to $L'_s$}
\EndFor
\State{Insert $L'$ to $L$}\Comment{Once here, $L'=L_{t}$.}
\EndFor
\end{algorithmic}
\end{breakablealgorithm}

\begin{proof}
The correctness of algorithm \ref{ALG_get_prob_depth} follows directly from the correctness of algorithm \ref{ALG_get_strings_in_language}.

We emphasize that there are two pairs $(i,j)$ such that $i+j=2^{t}$ for the left subtree once the execution of line (\ref{just_before_pruning_left}) completed, and those pairs are $(2^{t},0)$ (for the branch encoded by $0^{2^{t}}$) and the pair $(2^{t-1}-1,2^{t-1}+1)$ (for the branch encoded by $0^{2^{t-1}}1^{2^{t-1}-1}0$ ). The pruning affects the branch encoded by $0^{2^{t-1}}1^{2^{t-1}-1}0$ which explains the conditional if-statement at line (\ref{if_statement_left}). The same remark applies for the right subtree.
\end{proof}

Given input $k > 1$ to algorithm \ref{ALG_get_prob_depth} with output $L=(L_1,\ldots,L_{t},\ldots,L_{k})$ where $L_{t}$ is the list for all pairs $(i,j)$ in the expansion of $\gamma_{t}$ for $1\leq t\leq k$. We simply therefore compute the distribution of $Y_{k}$ as follow:
\begin{displaymath}
\mathbf{P}\{Y_{k}=y\}=\sum_{\substack{(i,j)\in L_{k}\\i+j=y}}{(p^{i}q^{j}+q^{i}p^{j})}.
\end{displaymath}
We observe algorithm (\ref{ALG_get_prob_depth}) outputs all $L_{t}$ such that $1\leq t\leq k$ from which we can compute $\mathbf{E}(Y)=\lim_{k\to\infty}\mathbf{E}(Y_{k})$ accurately as shown from the tables \ref{TAB_I}, \ref{TAB_II}, and \ref{TAB_III}. Despite so far the lack of a close formula for $\mathbf{E}(Y)$ that could allow possible connections to other well-known functions and problems, we can approximate $\mathbf{E}(Y)$ very easily and with as much accuracy as desired. We recall also that $H(\mathcal{C})$ from section \ref{SECT_gen_opt_algo} is equal to $\mathbf{E}(Y)$.

\begin{centering}
\renewcommand*{\arraystretch}{1.25}
\begin{longtable}{|l|l|l||l|l|l|}\caption{Expected height of $\mathcal{T}_{k}$ for $p=0.5$ and $p=0.51$}\label{TAB_I}\\
\hline
\multicolumn{3}{|c||}{$p=0.5$} & \multicolumn{3}{|c|}{$p=0.51$}\\
\hline
$k$ & $\mathbf{E}(Y_{k})$ & $\mathbf{E}(Y_{k})-\mathbf{E}(Y_{k-1})$ & $k$ & $\mathbf{E}(Y_{k})$ & $\mathbf{E}(Y_{k})-\mathbf{E}(Y_{k-1})$\\
\hline
\hline
\endfirsthead
\hline
\multicolumn{3}{|c||}{$p=0.5$} & \multicolumn{3}{|c|}{$p=0.51$}\\
\hline
$k$ & $\mathbf{E}(Y_{k})$ & $\mathbf{E}(Y_{k})-\mathbf{E}(Y_{k-1})$ & $k$ & $\mathbf{E}(Y_{k})$ & $\mathbf{E}(Y_{k})-\mathbf{E}(Y_{k-1})$\\
\hline
\hline
\endhead
\multicolumn{6}{r}{\boldgraytext{Continued on next page}}
\endfoot
\endlastfoot
\hline
$1$ & $1$ & \--\--\-- & $1$ & $0.9996$ & \--\--\-- \\
\hline
$2$ & $2.25$ & $1.25$ & $2$ & $2.249299$ & $1.249701$ \\
\hline
$3$ & $3.015625$ & $0.765625$ & $3$ & $3.016331$ & $0.767031$ \\
\hline
$4$ & $3.1016236$ & $0.085999$ & $4$ & $3.103287$ & $0.086957$ \\
\hline
$5$ & $3.1022065$ & $0.000583$ & $5$ & $3.1038984$ & $0.000611$ \\
\hline
$6$ & $3.1022065$ & $0.163457\cdot{}{10^{-7}}$ & $6$ & $3.1038984$ & $0.196847\cdot{}{10^{-7}}$ \\
\hline
$7$ & $3.1022065$ & $0.727524\cdot{}{10^{-17}}$ & $7$ & $3.1038984$ & $0.139177\cdot{}{10^{-16}}$ \\
\hline
$8$ & $3.1022065$ & $0.770550\cdot{}{10^{-36}}$ & $8$ & $3.1038984$ & $0.488859\cdot{}{10^{-35}}$ \\
\hline
$9$ & $3.1022065$ & $0.447530\cdot{}{10^{-74}}$ & $9$ & $3.1038984$ & $0.356004\cdot{}{10^{-72}}$ \\
\hline
$10$ & $3.1022065$ & $0.768362\cdot{}{10^{-151}}$ & $10$ & $3.1038984$ & $0.972355\cdot{}{10^{-147}}$ \\
\hline
\end{longtable}
\end{centering}

\begin{centering}
\renewcommand*{\arraystretch}{1.25}
\begin{longtable}{|l|l|l||l|l|l|}\caption{Expected height of $\mathcal{T}_{k}$ for $p=0.525$ and $p=1-e^{-1}$}\label{TAB_II}\\
\hline
\multicolumn{3}{|c||}{$p=0.525$} & \multicolumn{3}{|c|}{$p=1-e^{-1}$}\\
\hline
$k$ & $\mathbf{E}(Y_{k})$ & $\mathbf{E}(Y_{k})-\mathbf{E}(Y_{k-1})$ & $k$ & $\mathbf{E}(Y_{k})$ & $\mathbf{E}(Y_{k})-\mathbf{E}(Y_{k-1})$\\
\hline
\hline
\endfirsthead
\hline
\multicolumn{3}{|c||}{$p=0.525$} & \multicolumn{3}{|c|}{$p=1-e^{-1}$}\\
\hline
$k$ & $\mathbf{E}(Y_{k})$ & $\mathbf{E}(Y_{k})-\mathbf{E}(Y_{k-1})$ & $k$ & $\mathbf{E}(Y_{k})$ & $\mathbf{E}(Y_{k})-\mathbf{E}(Y_{k-1})$\\
\hline
\hline
\endhead
\multicolumn{6}{r}{\boldgraytext{Continued on next page}}
\endfoot
\endlastfoot
\hline
$1$ & $0.9975$ & \--\--\-- & $1$ & $0.930177$ & \--\--\-- \\
\hline
$2$ & $2.245622$ & $1.248122$ & $2$ & $2.125371$ & $1.195195$\\
\hline
$3$ & $3.020019$ & $0.774398$ & $3$ & $3.123057$ & $0.997685$\\
\hline
$4$ & $3.112037$ & $0.092017$ & $4$ & $3.405235$ & $0.282178$\\
\hline
$5$ & $3.112802$ & $0.000765$ & $5$ & $3.417840$ & $0.012606$\\
\hline
$6$ & $3.112802$ & $0.405387\cdot{}{10^{-7}}$ & $6$ & $3.417855$ & $0.149561\cdot{}{10^{-4}}$\\
\hline
$7$ & $3.112802$ & $0.827405\cdot{}{10^{-16}}$ & $7$ & $3.417855$ & $0.120218\cdot{}{10^{-10}}$\\
\hline
$8$ & $3.112802$ & $0.198626\cdot{}{10^{-33}}$ & $8$ & $3.417855$ & $0.417872\cdot{}{10^{-23}}$\\
\hline
$9$ & $3.112802$ & $0.594662\cdot{}{10^{-69}}$ & $9$ & $3.417855$ & $0.262288\cdot{}{10^{-48}}$\\
\hline
$10$ & $3.112802$ & $0.271310\cdot{}{10^{-140}}$ & $10$ & $3.417855$ & $0.526885\cdot{}{10^{-99}}$\\
\hline
\end{longtable}
\end{centering}

\begin{centering}
\renewcommand*{\arraystretch}{1.25}
\begin{longtable}{|l|l|l||l|l|l|}\caption{Expected height of $\mathcal{T}_{k}$ for $p=1/\sqrt{2}$ and $p=0.96875$ }\label{TAB_III}\\
\hline
\multicolumn{3}{|c||}{$p=1/\sqrt{2}$} & \multicolumn{3}{|c|}{$p=0.96875$}\\
\hline
$k$ & $\mathbf{E}(Y_{k})$ & $\mathbf{E}(Y_{k})-\mathbf{E}(Y_{k-1})$ & $k$ & $\mathbf{E}(Y_{k})$ & $\mathbf{E}(Y_{k})-\mathbf{E}(Y_{k-1})$\\
\hline
\hline
\endfirsthead
\hline
\multicolumn{3}{|c||}{$p=0.1/\sqrt{2}$} & \multicolumn{3}{|c|}{$p=0.96875$}\\
\hline
$k$ & $\mathbf{E}(Y_{k})$ & $\mathbf{E}(Y_{k})-\mathbf{E}(Y_{k-1})$ & $k$ & $\mathbf{E}(Y_{k})$ & $\mathbf{E}(Y_{k})-\mathbf{E}(Y_{k-1})$\\
\hline
\hline
\endhead
\multicolumn{6}{r}{\boldgraytext{Continued on next page}}
\endfoot
\endlastfoot
\hline
$1$ & $0.828427$ & \--\--\-- & $1$ & $0.121094$ & \--\--\--\\
\hline
$2$ & $1.935029$ & $1.106602$ & $2$ & $0.325676$ & $0.204582$\\
\hline
$3$ & $3.218082$ & $1.283053$ & $3$ & $1.032648$ & $0.706972$\\
\hline
$4$ & $3.888608$ & $0.670526$ & $4$ & $3.225589$ & $2.192941$\\
\hline
$5$ & $3.966054$ & $0.077446$ & $5$ & $9.000940$ & $5.775351$\\
\hline
$6$ & $3.966602$ & $ 0.000549$ & $6$ & $19.650858$ & $10.64992$\\
\hline
$7$ & $3.966602$ & $0.158247\cdot{}{10^{-7}}$ & $7$ & $29.516436$ & $9.865578$\\
\hline
$8$ & $3.966602$ & $0.715392\cdot{}{10^{-17}}$ & $8$ & $32.185252$ & $2.668815$\\
\hline
$9$ & $3.966602$ & $0.763973\cdot{}{10^{-36}}$ & $9$ & $32.270318$ & $0.085066$\\
\hline
$10$ & $3.966602$ & $0.445597\cdot{}{10^{-74}}$ & $10$ & $32.270366$ & $0.474475\cdot{}{10^{-4}}$\\
\hline
\multicolumn{3}{c|}{} & $11$ & $32.270366$ & $0.802734\cdot{}{10^{-11}}$ \\
\cline{4-6}
\multicolumn{3}{c|}{} & $12$ & $32.270366$ & $0.120147\cdot{}{10^{-24}}$ \\
\cline{4-6}
\end{longtable}
\end{centering}

All the entries from the previous tables were computed using the class for arbitrary-precision floating point
numbers of the NTL library from Shoup \url{https://libntl.org/}.

\section{Conclusion and future research}

The extraction algorithm constructed previously is iterative and is a generalization of the Von Neumann extraction algorithm. By modifying properly the tree of probabilities that one obtained by repeating the original Von Neumann's extraction algorithm, subsequences of different lengths from the biased source can be used to produce unbiased bits. From a programming point of view, the modifications pertains to prune the intermediate trees in such a way to minimize as much as possible the expected height of the resulting tree. The expected number of random biased bits required from the source was analyzed.

The work of Stout and Warren \cite{SouWar_1984} shows that there is no optimal algorithm with respect to the expected complexity. It seems however based on the references that no algorithm yields a better expected complexity than the iterative one constructed here, and the search for more efficient extraction algorithms continues. Is some knowledge about $p$ required to reach better expected complexity? Another line of research is when the source follows some Markovian processes or martingale processes, then the extraction becomes more complex, and the expected complexity more or less understood. Also the problem of transforming a sequence of non-uniform random combinatorial objects into another sequence of uniform, but different, or similar but with different properties, combinatorial objects has not been studied satisfactorily so far.

\bibliographystyle{ACM-Reference-Format}
\newcommand{\SortNoop}[1]{}

\end{document}